\documentclass[letterpaper,10pt, conference]{_aux/ieeeconf}
\usepackage{times}
\usepackage{setspace}
\usepackage{url}
\spacing{1}
\usepackage[utf8]{inputenc}
\usepackage[T1]{fontenc}
\usepackage{graphicx}
\usepackage{wrapfig}
\usepackage{sidecap} 
\usepackage[export]{adjustbox}
\usepackage{subcaption}
\usepackage[format=plain,font=small,labelfont=bf,labelsep=period]{caption}
\usepackage{float}
 
\usepackage{hyperref}

\usepackage{amsmath}
\usepackage{amssymb}
\usepackage{amsthm}
\usepackage{mathtools}
\usepackage[normalem]{ulem}
\usepackage{paralist}
\usepackage[space]{grffile}
\usepackage{color}
\let\labelindent\relax
\usepackage{enumitem}
\usepackage{bm}
\usepackage{cancel}
\usepackage{hhline}
\usepackage[c2 , nocomma]{optidef}
\usepackage{oubraces}
\usepackage{algorithmic}
\usepackage{graphicx}
\usepackage{textcomp}

\usepackage{amsfonts}
\usepackage{lipsum}
\usepackage{cleveref} 
\usepackage{cite}

\usepackage{comment}

\newtheorem{remark}{Remark}
\newtheorem{assumption}{Assumption}

\newtheorem{theorem}{Theorem}
\newtheorem{corollary}{Corollary}
\newtheorem{lemma}{Lemma}
\theoremstyle{definition}
\newtheorem{definition}{Definition}
\theoremstyle{remark}
\theoremstyle{definition}
\theoremstyle{definition}

\newtheorem{casestudy}{Case Study}

\newcommand{\R}{\mathbb{R}}

\newcommand{\xaux}{\hat{\boldsymbol{x}}}

\newcommand{\x}{\boldsymbol{x}}

\newcommand{\sspace}{\mkern9mu}
\newcommand{\nspace}[1]{\mkern#1mu}

\newcommand{\Cs}{\mathcal{C}_{\rm S}} 
\newcommand{\Cb}{\mathcal{C}_{\rm B}} 
\newcommand{\Csaux}{\widehat{\mathcal{C}}_{\rm S}}
\newcommand{\Cbaux}{\widehat{\mathcal{C}}_{\rm B}} 
\newcommand{\Cbi}{\mathcal{C}_{\rm I}}
\newcommand{\Cbistar}{\mathcal{C}^\star_{\rm I}}

\newcommand{\Btheta}{\boldsymbol{\theta}}

\newcommand{\Cbiaux}{\widehat{\mathcal{C}}^\star_{\rm I}}

\newcommand{\Tt}{T} 

\newcommand{\ub}{\boldsymbol{k}_{\rm b}}
\newcommand{\kstar}{\boldsymbol{k}_{\rm e}}

\newcommand{\phinb}[2]{\boldsymbol{\phi}_{\rm b} (#1, #2)}
\newcommand{\phinom}{\phinb{\tau}{\boldsymbol{x}}}
\newcommand{\phinomT}{\phinb{\Tt}{\boldsymbol{x}}}

\newcommand{\phins}[2]{\boldsymbol{{\phi}}_{\rm s} (#1, #2)}

\newcommand{\phinbaux}[2]{\boldsymbol{\widehat{\phi}}_{\rm s} (#1, #2)}
\newcommand{\phiaux}{\phinbaux{\tau}{\xaux}}
\newcommand{\phiauxT}{\phinbaux{\Tt}{\xaux}}

\definecolor{darkblue}{RGB}{0,0,102}
\definecolor{lightblue}{RGB}{77,77,148}

\definecolor{gold}{RGB}{234, 170, 0}
\definecolor{metallic_gold}{RGB}{139, 111, 78}

\usepackage{xfrac}

\usepackage{dblfloatfix}

\usepackage{pdfpages}
\usepackage{graphicx}

\usepackage{enumitem}

\newcommand{\argmax}{\operatornamewithlimits{arg\,max}}

\newcommand{\ed}[1]{{\color{blue} ED: $\#$1}}
\def\BibTeX{{\rm B\kern-.05em{\sc i\kern-.025em b}\kern-.08em
T\kern-.1667em\lower.7ex\hbox{E}\kern-.125emX}}

\IEEEoverridecommandlockouts

\begin{document}

\title{\LARGE
\textbf{Generalizations of Backup Control Barrier Functions: \\ Expansion and Adaptation for Input-Bounded Safety-Critical Control}
}

\author{David E. J. van Wijk$^{1}$, Dohyun Lee$^{1}$,
Ersin Da\c{s}$^{2}$, Tamas G. Molnar$^{3}$, Aaron D. Ames$^{1}$, and Joel W. Burdick$^{1}$
\thanks{*This work was supported in part by DARPA under the LINC program and by the Technology Innovation Institute (TII).}
\thanks{$^{1}$Mechanical and Civil Engineering, California Institute of Technology, Pasadena, CA 91125, USA, \texttt{\{vanwijk, dohyun, ames, jburdick\}@caltech.edu}.}
\thanks{$^{2}$Mechanical, Materials, and Aerospace Engineering, Illinois Institute of Technology, Chicago, IL 60616, USA, \texttt{edas2@illinoistech.edu}.}
\thanks{$^{3}$Mechanical Engineering, Wichita State University, Wichita, KS 67260, USA, \texttt{tamas.molnar@wichita.edu}.}
}

\captionsetup{font=footnotesize}

\maketitle

\begin{abstract}
Guaranteeing the safety of nonlinear systems with bounded inputs remains a key challenge in safe autonomy. Backup control barrier functions (bCBFs) provide a powerful mechanism for constructing controlled invariant sets by propagating trajectories under a pre-verified backup controller to a forward invariant backup set. While effective, the standard bCBF method utilizes the same backup controller for both set expansion and safety certification, which can restrict the expanded safe set and lead to conservative dynamic behavior. In this study, we generalize the bCBF framework by separating the set-expanding controller from the verified backup controller, thereby enabling a broader class of expansion strategies while preserving formal safety guarantees. We establish sufficient conditions for forward invariance of the resulting implicit safe set and show how the generalized construction recovers existing bCBF methods as special cases. Moreover, we extend the proposed framework to parameterized controller families, enabling online adaptation of the expansion controller while maintaining safety guarantees in the presence of input bounds.
\end{abstract}
\begin{spacing}{0.95}
\section{Introduction}

Ensuring the safety of modern autonomous systems is crucial to their real-world deployment. \textit{Control barrier functions} (CBFs) have emerged as a popular approach to guaranteeing the safety of such systems by enforcing the forward invariance of safe sets \cite{ames_2017}. While CBFs have been successfully deployed across many applications, from robotics to aerospace systems \cite{long2026sensor,dickson2025safe,agrawal2025online},
many important challenges remain. 
Importantly, one must verify that for all states in the safe set, there exists a \textit{feasible} safe controller, which is especially difficult for complex systems with input bounds.

Recent works have proposed reachability and prediction-based methods for constructing safety certificates that reduce conservatism under input constraints, thereby certifying a larger region of safe operation. Specifically, \cite{choi2021robust} introduces robust control barrier-value functions to unify Hamilton--Jacobi reachability and CBFs, yielding certificates whose safe set recovers the viability kernel while remaining amenable to online CBF-based filtering. Similarly, \cite{tonkens2022refining} uses dynamic-programming-based reachability updates to refine a candidate CBF, producing certificates that become progressively less conservative and converge to a valid barrier function. Along a related predictive direction, \cite{wiltz2023construction} shows that a time-invariant CBF can be constructed from finite-horizon predictions using only a known subset of a controlled invariant set.

Perhaps one of the most promising approaches to address input constraints is \textit{backup control barrier functions} (bCBFs) \cite{gurriet_online_2018,gurriet_scalable_2020}.
Using the idea of set expansion, bCBFs create
controlled invariant safe sets by examining the evolution of the system under a pre-verified \textit{backup controller}, and ensuring the finite-time backward reachability of a \textit{backup set}. 
This construction yields forward invariance conditions
which can be enforced via a quadratic program (QP).
This approach scales to high-dimensional systems and has showcased strong performance across complex robotic platforms \cite{gurriet_scalable_2020,singletary2021onboard,ko2024backup,janwani_multibackup}.

One of the key advantages of the bCBF method is that expanding the backup set with a pre-verified backup controller yields a tractable optimization problem and desirable properties on the expanded set. 
However, by requiring that the expansion is done with the backup controller,
the possible set of controllers is significantly restricted.
This may yield an expanded set that is overly conservative, potentially resulting in reduced performance. In this work, we provide a method to separate the expanding controller and the backup controller, thus allowing for a much more general technique for set expansion. 
Our contributions are as follows.
\begin{itemize}
    \item 
    We generalize the backup set method by decoupling the set-expanding controller from the backup controller that certifies the invariance of the backup set, and show that the resulting implicit safe set is controlled invariant.
    \item 
    We derive forward invariance conditions for the generalized implicit set and formulate the corresponding
    QP that
    preserves safety guarantees with bounded inputs.
    \item 
    We extend the framework to parameterized controller families, enabling online adaptation of the expansion controller while preserving invariance guarantees.
\end{itemize}
\section{Preliminaries}\label{sec:prelims}

\subsection{Control Barrier Functions}\label{sec:CBF}
Consider the control affine dynamical system
\begin{align}\label{eq:affine-dynamics}
    \dot{\x} = \boldsymbol{f}(\x) + \boldsymbol{g}(\x)\boldsymbol{u}, \nspace{6} \x \in \mathcal{X} \subseteq \R^n, \nspace{6} \boldsymbol{u} \in \mathcal{U} \subseteq \R^m,
\end{align}
where ${\boldsymbol{f}:\mathcal{X} \to \R^n}$ and ${\boldsymbol{g}:\mathcal{X} \to \R^{n \times m}}$ are continuously differentiable,
and the input constraint set $\mathcal{U}$ is a convex polytope.
In this work, we are interested in developing techniques for designing a controller ${\boldsymbol{k}:\mathcal{X} \to \mathcal{U}}$, ${\boldsymbol{u} = \boldsymbol{k}(\x)}$, that guarantees the safety of the closed-loop system corresponding to \eqref{eq:affine-dynamics}.
We frame safety as membership of the state to a set ${\Cs \subseteq \mathcal{X}}$ defined by a continuously differentiable safety function ${h : \mathcal{X} \rightarrow \R}$ as ${\Cs \triangleq \{\x\in\mathcal{X} : h(\x) \geq 0 \}}$.
We view safety in the sense of forward invariance. Formally, a set $\Cs$ is \textit{forward invariant} along the closed-loop system associated with \eqref{eq:affine-dynamics} if ${\x_0 \in \Cs \implies \x(t) \in \Cs, \forall \nspace{1} t \geq 0}$.

\textit{Control barrier functions} (CBFs) provide forward invariance conditions and enable the synthesis of safe controllers.
We call $h$ a CBF for \eqref{eq:affine-dynamics} if ${\sup_{\boldsymbol{u} \in \mathcal{U}} \dot{h}(\x,\boldsymbol{u}) > -\alpha(h(\x))}$ holds for all ${\x \in \Cs}$ with an extended class-$\mathcal{K}$ function $\alpha$.
\begin{theorem}[\hspace{-0.01em}\cite{ames_2017}] \label{thm: cbf}
If $h$ is a CBF for \eqref{eq:affine-dynamics} on $\Cs$, then any locally Lipschitz controller $\boldsymbol{k}:\mathcal{X} \to \mathcal{U}$, $\boldsymbol{u}=\boldsymbol{k}(\x)$ satisfying 
\begin{align} \label{eq: cbf_condition}
    \dot{h}(\x, \boldsymbol{u}) \ge -\alpha(h(\x)),
\end{align}
for all $\x \in \Cs$ renders the set $\Cs$ forward invariant.
\end{theorem}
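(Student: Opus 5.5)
The plan is to study the scalar signal $t \mapsto h(\x(t))$ along closed-loop solutions and apply a comparison argument. The key ingredient is Nagumo's theorem, or equivalently the comparison lemma for scalar differential inequalities.

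First, I would establish existence and uniqueness of solutions. Because $\boldsymbol{f}$ and $\boldsymbol{g}$ are continuously differentiable and $\boldsymbol{k}$ is locally Lipschitz, the closed-loop vector field $\boldsymbol{f}(\x)+\boldsymbol{g}(\x)\boldsymbol{k}(\x)$ is locally Lipschitz. Hence, for any $\x_0\in\Cs$, there is a unique solution $\x(t)$ on a maximal interval $[0,t_{\max})$. Next, I would define $y(t) \triangleq h(\x(t))$. This function is continuously differentiable, since $h$ is continuously differentiable, and its derivative is
\begin{align*}
\dot y(t) = \dot h(\x(t),\boldsymbol{k}(\x(t))).
\end{align*}

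The main step is to show that $y(t)\ge 0$ for all $t\in[0,t_{\max})$. The difficulty is that \eqref{eq: cbf_condition} is only assumed on $\Cs$, so I cannot simply compare $y$ with the solution of $\dot z=-\alpha(z)$ over the whole interval. I would argue by contradiction. Suppose some $t_1$ has $y(t_1)<0$, and let $t_0 \triangleq \sup\{t\le t_1 : y(t)\ge 0\}$. By continuity $y(t_0)=0$, and $y<0$ on $(t_0,t_1]$. To get a contradiction I need the differential inequality slightly outside $\Cs$. I would try two routes.

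\begin{enumerate}
\item \emph{Boundary condition.} On $\partial\Cs$, where $h=0$, the condition gives $\dot y(t_0)\ge -\alpha(0)=0$. This is exactly Nagumo's subtangential condition, $\nabla h\cdot(\boldsymbol{f}+\boldsymbol{g}\boldsymbol{k})\ge 0$ wherever $h=0$. Nagumo's theorem then gives invariance, provided $\nabla h\neq 0$ on $\partial\Cs$, so that $\Cs$ is a regular closed set. Uniqueness of solutions is what allows the weak inequality $\ge 0$ to suffice here.
\item \emph{Open-neighborhood version.} If I want to avoid the regularity assumption, I would use the standard version in which $h$ satisfies \eqref{eq: cbf_condition} on an open set $\mathcal{D}\supset\Cs$. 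The comparison lemma then gives $y(t)\ge \beta(y(0),t)$, where $\beta$ is the class-$\mathcal{KL}$-type flow of $\dot z=-\alpha(z)$. Since $\beta(z,t)\ge 0$ whenever $z\ge0$, this yields $y\ge0$ directly.
\end{enumerate}

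I expect the main obstacle to be handling the inequality only on $\Cs$, that is, justifying the boundary argument. I would rely on the cited result \cite{ames_2017} for the precise regularity hypotheses.

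Finally, the conclusion holds on $[0,t_{\max})$. If $\Cs$ is compact, or if solutions are otherwise assumed to exist for all time, then $t_{\max}=\infty$, and $\x(t)\in\Cs$ for all $t\ge0$. The requirement $\boldsymbol{k}(\x)\in\mathcal{U}$ only ensures that the controller is admissible and plays no role in the invariance argument.
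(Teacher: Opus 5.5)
The paper gives no proof of this statement; it imports it from \cite{ames_2017}, where it is obtained by the Nagumo-type boundary argument of your route~1. At points with $h=0$ the condition reduces to $\dot h\ge-\alpha(0)=0$, and Nagumo's theorem, with uniqueness coming from local Lipschitzness of $\boldsymbol{f}+\boldsymbol{g}\boldsymbol{k}$, gives invariance. Your strategy is therefore the right one. As written, however, it proves the theorem only under an extra assumption. You make the argument conditional on $\nabla h\neq\boldsymbol{0}$ on $\partial\Cs$ and defer ``the precise regularity hypotheses'' to the reference.

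\textbf{The missing idea.} You never use the hypothesis that $h$ is a CBF, and that hypothesis supplies exactly this regularity. The paper's CBF definition uses a \emph{strict} inequality. Suppose $h(\x)=0$ and $\nabla h(\x)=\boldsymbol{0}$. Then $\dot h(\x,\boldsymbol{u})=0$ for every $\boldsymbol{u}$, so $\sup_{\boldsymbol{u}\in\mathcal{U}}\dot h(\x,\boldsymbol{u})>-\alpha(0)=0$ fails. Hence $\nabla h\neq\boldsymbol{0}$ wherever $h=0$. It follows that $\partial\Cs=\{h=0\}$ and that the tangent cone at each boundary point is the half-space $\{\boldsymbol{v}:\nabla h(\x)\boldsymbol{v}\ge0\}$. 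Nagumo's subtangential condition is then exactly the inequality you have, with no additional assumption.

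This step is essential, not cosmetic: the controller inequality \eqref{eq: cbf_condition} alone does not imply invariance. Take $\dot x=u$, $\mathcal{U}=[-1,1]$, $h(x)=-x^2$ and $\boldsymbol{k}\equiv1$. The inequality holds trivially on $\Cs=\{0\}$, yet the solution $x(t)=t$ leaves $\Cs$ immediately. What excludes this example is precisely that $h$ fails the strict CBF condition there.

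\textbf{Two secondary points.}
\begin{enumerate}
\item Your route~2 assumes the inequality on an open set $\mathcal{D}\supset\Cs$. That is a stronger hypothesis than the statement's, so it proves a different theorem and should be dropped.
\item Your first-exit setup at $t_0$ does not by itself give a contradiction when $\dot y(t_0)=0$. That is exactly where Nagumo's theorem, with uniqueness, does the work.
\end{enumerate}

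If you want a self-contained argument, use the regularity once more. For $\epsilon>0$, the perturbed field $\boldsymbol{f}+\boldsymbol{g}\boldsymbol{k}+\epsilon\nabla h^\top$ satisfies $\nabla h\,(\boldsymbol{f}+\boldsymbol{g}\boldsymbol{k}+\epsilon\nabla h^\top)\ge\epsilon\|\nabla h\|^2>0$ at every point of $\partial\Cs$. So your first-exit contradiction goes through for each of its solutions, and they remain in $\Cs$. A Gr\"onwall estimate that uses only the Lipschitz constant of the unperturbed field shows these solutions converge to $\x(t)$ uniformly on compact time intervals as $\epsilon\to0$. Closedness of $\Cs$ then gives $\x(t)\in\Cs$.
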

Given a primary controller ${\boldsymbol{k}_{\rm p} : \mathcal{X} \rightarrow \mathcal{U}}$, safety can be enforced via a quadratic program (QP) subject to the constraint in \eqref{eq: cbf_condition}, to solve for a minimally-invasive safe control signal.
Thus, CBFs are a powerful tool for generating safe controllers, but verifying that a safety function $h$ is actually a CBF is notoriously challenging, especially for
nonlinear systems
with tight input bounds in $\mathcal{U}$.
When input bounds are present, we must consider the notion of controlled invariance. A set $\Cs$ is \textit{controlled invariant} if there exists a controller $\boldsymbol{k} : \mathcal{X} \!\rightarrow\! \mathcal{U}$,  $\boldsymbol{u} \!=\! \boldsymbol{k}(\x)$ rendering $\Cs$ forward invariant for \eqref{eq:affine-dynamics}.

\subsection{Backup Control Barrier Functions}\label{sec:bCBF}

Although finding a CBF is challenging with input bounds, it is often possible
via constructive methods, e.g., by linearizing \eqref{eq:affine-dynamics} about a stabilizable equilibrium
and considering a sublevel set of a Lyapunov function\footnote{This approach is successfully used to generate a backup controller and set for a nonlinear quadrotor in \Cref{ex:quadrotor}.}.
Yet, these techniques often result in a very conservative CBF (with a very small safe set), which can significantly inhibit the performance of CBF-based controllers.
However, as shown in \cite{gurriet_scalable_2020,gurriet_online_2018}, such conservative sets can be \textit{expanded} through a backup controller, resulting in a less conservative definition of safety. We now review this technique, and formalize the notion of set ``expansion''. For this, we require the following definition.
\begin{definition}[Backup Set and Controller] \label{def:backup}
    A continuously differentiable controller ${\ub \!:\! \mathcal{X} \!\to\! \mathcal{U}}$ is called a \textit{backup controller} if it renders a set ${\Cb \!\triangleq\! \{ \x \!\in\! \mathcal{X} \!:\! h_{\rm b}(\x) \!\geq\! 0 \} \!\subseteq\! \Cs}$, called a \textit{backup set}, forward invariant for \eqref{eq:affine-dynamics}. The continuously differentiable function ${h_{\rm b} \!:\! \mathcal{X}\! \rightarrow\! \R}$ is called the \textit{backup function} 
    and has zero as a regular value (i.e., ${\nabla h_{\rm b}(\x) \!\neq\! \mathbf{0}
    , \forall \nspace{2}
    \x \!\in\! \partial \Cb}$). 
\end{definition}
\begin{remark}
By definition, the backup function ${h_{\rm b}}$ is a CBF if $\Cb$ is compact. Constructive methods for obtaining backup controller and set pairs are discussed extensively in \cite{gacsi2025braking}.
\end{remark}

We will henceforth not assume that $h$ is a CBF, but $h$ merely encodes a safety constraint. Because ${h_{\rm b}}$
is a CBF, this could be used directly 
as in \Cref{sec:CBF}
to enforce safety, given that ${\Cb \subseteq \Cs}$. However, as previously mentioned, techniques to construct CBFs often result in conservative sets, yielding overly restrictive safe controllers. The backup set method expands $\Cb$ by considering the evolution of the dynamical system under the backup controller. Consider the \textit{flow} of \eqref{eq:affine-dynamics} under $\ub$, written as $\phinom$, over ${\tau \!\in\! [0,T]}$ with a constant time horizon ${T > 0}$. We can obtain this flow by forward integrating the following differential equation:
\begin{align} \label{eq: f_b}
    \frac{\partial}{\partial \tau}\phinom = \boldsymbol{f}_{\rm b}(\phinom), \quad \boldsymbol{\phi}_{\rm b}(0,\x) = \x,
\end{align}
with the short-hand notation ${\boldsymbol{f}_{\rm b}(\x)\triangleq \boldsymbol{f}(\x) + \boldsymbol{g}(\x)\ub(\x)}$. 

Now consider the expanded set defined using this flow:
\begin{align} \label{def:C_BI}
    \Cbi \triangleq \left\{ \x \in \mathcal{X} \,\middle| 
    \begin{array}{c}
    h(\phinom) \geq 0, \forall \nspace{1} \tau \in [0,T], \\
    h_{\rm b}(\phinomT) \geq 0 \\
    \end{array}
    \!\!\right\}
    \subseteq \Cs.
\end{align}
The set $\Cbi$ contains the states from which the system \eqref{eq:affine-dynamics} can \textit{safely} reach the backup set, $\Cb$, in a finite time $T$, when using the backup controller $\ub$. We emphasize that $\Cbi$ encodes that the flow is contained within $\Cs$
over the entirety of the time horizon, ${\tau \in [0,T]}$. This ensures that ${\Cbi \!\subseteq\! \Cs}$, and thus membership of $\x$ to $\Cbi$ implies membership of $\x$ to $\Cs$. Further, we note that because ${\Cbi \equiv \Cb}$ for ${T = 0}$, and $\Cbi$ grows for ${T > 0}$, $\Cbi$ is indeed an \textit{expansion} of $\Cb$. We now review the main benefit of expanding the backup set in this manner:
$\Cbi$ is controlled invariant by construction.
\begin{lemma}[\!\!\cite{gurriet_scalable_2020}\!{\cite[Lem.~1]{tamasACC_ROM_bCBF}}] \label{lem: CI_controlledInv}
The set $\Cbi$ in \eqref{def:C_BI} is controlled invariant, and the backup controller ${\ub : \mathcal{X}
\rightarrow \mathcal{U}}$ renders $\Cbi$ forward invariant along \eqref{eq:affine-dynamics}, such that ${\x \in \Cbi \implies \phinom \in \Cbi \subseteq \Cs, \forall  \nspace{1} \tau \geq 0}$.
\end{lemma}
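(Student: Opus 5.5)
We will show directly that $\Cbi$ is forward invariant under $\ub$; controlled invariance then follows by taking $\boldsymbol{k} = \ub$, which maps into $\mathcal{U}$ by Definition~\ref{def:backup}.

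The main tool is the semigroup property of the flow of \eqref{eq: f_b}: $\boldsymbol{\phi}_{\rm b}(\tau+s,\x) = \boldsymbol{\phi}_{\rm b}(\tau,\phinb{s}{\x})$. This holds because $\boldsymbol{f}_{\rm b}$ is continuously differentiable, so solutions are unique. It is also worth noting that for $\x \in \Cbi$ the flow cannot escape in finite time before reaching $\Cb$, and afterwards it remains in $\Cb$. Existence on $[0,\infty)$ is therefore assumed, or it follows when $\Cb$ is compact.

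Fix $\x \in \Cbi$ and $s \ge 0$, and set $\boldsymbol{y} = \phinb{s}{\x}$. We must check the two defining conditions of \eqref{def:C_BI} at $\boldsymbol{y}$. The argument splits into two cases according to whether $\tau+s$ lies in $[0,T]$ or exceeds $T$.

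\textbf{Safety condition.} Take $\tau \in [0,T]$, so that $\phinb{\tau}{\boldsymbol{y}} = \phinb{\tau+s}{\x}$.
\begin{itemize}
\item If $\tau + s \le T$, then $h(\phinb{\tau+s}{\x}) \ge 0$ directly from $\x \in \Cbi$.
\item If $\tau + s > T$, write $\phinb{\tau+s}{\x} = \phinb{\tau+s-T}{\phinomT}$. Since $\phinomT \in \Cb$ and $\ub$ renders $\Cb$ forward invariant, this point lies in $\Cb \subseteq \Cs$, so $h \ge 0$.
\end{itemize}

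\textbf{Terminal condition.} We have $\phinb{T}{\boldsymbol{y}} = \phinb{T+s}{\x} = \phinb{s}{\phinomT}$. This lies in $\Cb$ by the forward invariance of $\Cb$, so $h_{\rm b}(\phinb{T}{\boldsymbol{y}}) \ge 0$.

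Both conditions hold, so $\boldsymbol{y} \in \Cbi$. Finally, $\Cbi \subseteq \Cs$ by taking $\tau = 0$ in \eqref{def:C_BI}.

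The only delicate point is justifying the semigroup identity and global existence of the flow. This is handled by uniqueness of solutions under the $C^1$ assumptions, together with the forward invariance of $\Cb$, which prevents escape after time $T$. The remaining steps are bookkeeping on time indices.
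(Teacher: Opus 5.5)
Your proof is correct and takes essentially the same approach as the paper: in its proof of the analogous \Cref{lemma:C_I_star_controlled_invariant} (with $\boldsymbol{k}_{\rm s}$ in place of $\ub$), it combines the semigroup property of the flow with forward invariance of $\Cb$ after time $T$ to verify both defining conditions at the shifted state. Your case split on $\tau+s \le T$ versus $\tau+s > T$ simply unpacks the paper's intermediate step that the flow stays in $\Cs$ for all $\tau \ge 0$, and your remark on global existence of the flow makes explicit an assumption the paper leaves implicit.
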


\begin{theorem}[\!\!\cite{gurriet_scalable_2020}]
    There exist class-$\mathcal{K}_{\infty}$ functions $\alpha$, $\alpha_{\rm b}$ and a locally Lipschitz controller ${\boldsymbol{k} \!:\! \mathcal{X} \!\rightarrow\! \mathcal{U}, \boldsymbol{u} \!=\! \boldsymbol{k}(\x)}$ satisfying
    \begin{subequations} \label{eq: CI_invariance_cond}
    \begin{align}
        \dot{{h}}(\phinom,\boldsymbol{u}) &\geq -\alpha\big({h}(\phinom)\big), \sspace \forall \tau \in [0,T], \label{eq: htraj_nom}\\
        \dot{{h}}_{\rm b}(\phinomT,\boldsymbol{u}) &\geq -\alpha_{\rm b}\big({h}_{\rm b}(\phinomT)\big), \label{eq: hb_nom}
    \end{align}
    \end{subequations}
    for all
    ${\x \in \Cbi}$. The same controller renders $\Cbi \subseteq \Cs$ forward invariant, and satisfies the input constraints.
\end{theorem}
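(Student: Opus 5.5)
The plan is to take the backup controller itself as the witness, $\boldsymbol{k} = \ub$. It is continuously differentiable, hence locally Lipschitz, and it maps into $\mathcal{U}$, so the input constraints hold automatically. It remains to exhibit $\alpha,\alpha_{\rm b}$ for which \eqref{eq: htraj_nom}--\eqref{eq: hb_nom} hold on $\Cbi$, and then to argue forward invariance.

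The key computation is the semigroup property of the flow. Differentiating $\boldsymbol{\phi}_{\rm b}(\tau,\x(t))$ along the closed loop $\dot{\x} = \boldsymbol{f}_{\rm b}(\x)$ gives
\begin{align*}
\frac{d}{dt}\boldsymbol{\phi}_{\rm b}(\tau,\x(t)) = \frac{\partial \boldsymbol{\phi}_{\rm b}}{\partial \x}(\tau,\x)\,\boldsymbol{f}_{\rm b}(\x) = \frac{\partial}{\partial\tau}\boldsymbol{\phi}_{\rm b}(\tau,\x).
\end{align*}
Hence, with $\boldsymbol{u}=\ub(\x)$, the quantity $\dot h(\phinom,\boldsymbol{u})$ is exactly $\frac{\partial}{\partial\tau} h(\phinom)$. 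Likewise $\dot h_{\rm b}(\phinomT,\boldsymbol{u}) = \nabla h_{\rm b}(\phinomT)\cdot \boldsymbol{f}_{\rm b}(\phinomT)$.

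For \eqref{eq: hb_nom}, I use that $\ub$ renders $\Cb$ forward invariant and that $0$ is a regular value of $h_{\rm b}$. On $\Cbi$ the point $\phinomT$ lies in $\Cb$. By Nagumo's theorem, $\dot h_{\rm b}\ge 0$ on $\partial\Cb$. On a compact set $\Cb$ (cf.\ the Remark), a standard continuity argument then produces a class-$\mathcal{K}_\infty$ $\alpha_{\rm b}$ with $\dot h_{\rm b}(\boldsymbol{y},\ub(\boldsymbol{y}))\ge -\alpha_{\rm b}(h_{\rm b}(\boldsymbol{y}))$ on $\Cb$. I would take
\begin{align*}
\alpha_{\rm b}(r)\ge \max_{\boldsymbol{y}\in\Cb,\,h_{\rm b}(\boldsymbol{y})\le r}\max\{0,-\dot h_{\rm b}(\boldsymbol{y})\},
\end{align*}
which vanishes at $r=0$ and is nondecreasing, and then dominate it by a $\mathcal{K}_\infty$ function.

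For \eqref{eq: htraj_nom}, the same construction applies to $h$ on the compact set $\{(\tau,\x)\}$ with $\x\in\Cbi$. Here I again need that $\frac{\partial}{\partial\tau}h(\phinom)\ge 0$ whenever $h(\phinom)=0$. This holds for interior $\tau\in(0,T)$, since $h(\phinom)\ge0$ on $[0,T]$ forces such a zero to be a local minimum. The endpoints need care: at $\tau=T$, $\phinomT\in\Cb\subseteq\Cs$ and $\Cb$ is invariant, so the trajectory stays in $\Cs$ beyond $T$ and the derivative is nonnegative. At $\tau=0$, the condition follows from the same minimum argument.

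\emph{This step is the main obstacle.} Producing genuinely class-$\mathcal{K}_\infty$ (rather than merely nonnegative-valued) functions uniformly needs compactness of $\Cbi$ (or of $\Cb$ plus continuity of the flow). It also needs a continuity/minimum-gap argument showing that $\sup(-\dot h)$ over the set $\{h\le r\}$ tends to $0$ as $r\to0$. I would handle this by contradiction with a convergent subsequence.

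Finally, forward invariance under any such controller follows by applying \Cref{lem: CI_controlledInv} directly for $\boldsymbol{k}=\ub$. For a general $\boldsymbol{k}$ satisfying \eqref{eq: CI_invariance_cond}, I would view $\Cbi$ as the intersection of the sets $\{h(\phinom)\ge0\}$, $\tau\in[0,T]$, and $\{h_{\rm b}(\phinomT)\ge0\}$, and treat each constraint as a time-varying CBF condition. The comparison lemma applied to each, as in Theorem~\ref{thm: cbf}, keeps all constraints nonnegative along the closed loop. Thus $\Cbi\subseteq\Cs$ is forward invariant.
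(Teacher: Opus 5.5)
The paper gives no proof of this theorem. It is quoted from \cite{gurriet_scalable_2020}, and the only supporting argument in the text is \Cref{lem: CI_controlledInv}. That lemma furnishes exactly your witness $\boldsymbol{k}=\ub$ (smooth, $\mathcal{U}$-valued, rendering $\Cbi$ forward invariant), but the paper never argues that $\alpha,\alpha_{\rm b}$ exist. Your proposal fills that hole correctly. The identity $\frac{\partial \boldsymbol{\phi}_{\rm b}}{\partial \x}(\tau,\x)\,\boldsymbol{f}_{\rm b}(\x)=\boldsymbol{f}_{\rm b}(\phinom)$ turns \eqref{eq: htraj_nom}--\eqref{eq: hb_nom} into Lie-derivative conditions along the backup flow. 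Your sign conditions on the zero level sets are right, including the one-sided argument at $\tau=0$ and the use of invariance of $\Cb$ at $\tau=T$. The subsequence argument followed by domination with a $\mathcal{K}_\infty$ function is the standard construction. You can avoid the case split in $\tau$: by \Cref{lem: CI_controlledInv}, the points $\phinom$ with $\tau\in[0,T]$, $\x\in\Cbi$ range exactly over $\Cbi$. So \eqref{eq: htraj_nom} for $\ub$ just says $\nabla h(\boldsymbol{y})\boldsymbol{f}_{\rm b}(\boldsymbol{y})\ge-\alpha(h(\boldsymbol{y}))$ on $\Cbi$, and the sign condition where $h(\boldsymbol{y})=0$ follows at once from $h(\phinb{s}{\boldsymbol{y}})\ge0$ for all $s\ge0$.

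What needs repair is the compactness hypothesis. State it as an assumption; the Remark only says what happens \emph{if} $\Cb$ is compact. Some such assumption is unavoidable, because the statement as quoted fails without it. Take $\boldsymbol{g}\equiv\boldsymbol{0}$, $\boldsymbol{f}(\x)=[-(1+x_2^2)x_1,\ 0]^\top$, $\ub\equiv\boldsymbol{0}$, $h_{\rm b}(\x)=x_1$, $h(\x)=1+x_1$. Then $\Cbi=\Cb=\{x_1\ge0\}$, and for every controller \eqref{eq: hb_nom} reads $-(1+x_2^2)r\ge-\alpha_{\rm b}(r)$, with $r=h_{\rm b}(\phinomT)$ free in $[0,\infty)$ and $x_2$ arbitrary; no $\alpha_{\rm b}$ satisfies this.

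Your fallback ``compactness of $\Cb$ plus continuity of the flow'' does not give what your argument for \eqref{eq: htraj_nom} actually uses, namely compactness of $\Cbi$. If backward solutions escape in finite time, $\Cbi$ can be unbounded even though $\Cb$ is compact. For example, $\boldsymbol{f}_{\rm b}(x)=-x^3$, $h_{\rm b}(x)=1-x^2$, $h\equiv1$, $T=1$ gives $\Cbi=\R$. So assume $\Cbi$ compact; this also makes the closed subset $\Cb\subseteq\Cbi$ compact, and it holds e.g.\ when $\Cs$ is compact. Alternatively, assume the backward flow exists on $[-T,0]$ from all of $\Cb$, so that $\Cbi\subseteq\phinb{-T}{\Cb}$ is compact.

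Finally, drop the paragraph on a general $\boldsymbol{k}$. The theorem only concerns ``the same controller,'' which \Cref{lem: CI_controlledInv} already settles. As sketched, that paragraph is also not rigorous: the conditions are imposed only on $\Cbi$, so the comparison lemma applied to each fixed-$\tau$ constraint says nothing past the first exit time from $\Cbi$.
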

We are now in a position to enforce the safety of \eqref{eq:affine-dynamics} using the less conservative set, $\Cbi$. Similarly to \Cref{sec:CBF}, it is possible to solve a QP to obtain the safe control signal:
\begin{align*} 
    {\boldsymbol{k}}_{\rm safe}(\x) = \underset{{\boldsymbol{u}} \in {\mathcal{U}}}{\text{argmin}} \mkern9mu &
    \| {\boldsymbol{u}} - {\boldsymbol{k}}_{\rm p}(\x) \| ^{2} \quad 
    \tag{bCBF-QP} \label{eq:bcbf-qp}
    \\
    \text{s.t.  } 
    & \eqref{eq: htraj_nom}, \ \eqref{eq: hb_nom}.
\end{align*}
As written, $\eqref{eq: htraj_nom}$ represents an infinite number of constraints as $\tau$ varies continuously over the interval ${[0,T]}$. Therefore, in practice, this horizon is discretized
and the flow is evaluated at a finite number of points along the horizon.
We assume for the remainder of the paper that for sufficiently 
fine discretization,
satisfaction of the discretized version of $\eqref{eq: htraj_nom}$ implies satisfaction of its continuous counterpart.

Though bCBFs have been used successfully across a wide range of applications, the size of the expanded, implicit safe set $\Cbi$ is inherently tied to the choice of the backup controller. In this work, we challenge this paradigm, and offer a generalization of the standard backup set method.
\section{Generalized Backup Set Method}\label{sec:generalbCBF}

Our main idea is that the controller which expands the backup set $\Cb$ can be \textit{different} from the one which renders $\Cb$ invariant. Specifically, the backup controller does not have to be the one that expands $\Cb$ to obtain an implicit safe set. 

Assume that there exists a continuously differentiable controller ${\boldsymbol{k}_{\rm e} : \mathcal{X} \rightarrow \mathcal{U}}$ which represents the \textit{set-expanding} controller.
Though the controller $\boldsymbol{k}_{\rm e}$ may be a good set expander\footnote{The notion of a ``good'' set expander will be further elaborated upon.},
it does not
necessarily render the backup set forward invariant. Because of this, one cannot directly use this
controller
to construct an implicit safe set as done in \Cref{sec:bCBF} for $\Cbi$ (c.f. \eqref{def:C_BI}). Instead, we can still leverage the properties of the backup controller from \Cref{def:backup}, by considering a switching-based controller, written as
\begin{align}\label{eq:switching_controller}
    \boldsymbol{k}_{\rm s}(\x) = \big(1 - \eta(\x)\big)\kstar(\x) + \eta(\x)\ub(\x),
\end{align}
with a switching function ${\eta : \mathcal{X} \rightarrow [0,1]}$. Notice that by construction, ${\boldsymbol{k}_{\rm s}(\x) \in \mathcal{U}}$ for all ${\x \in \mathcal{X}}$ if ${\ub(\x) \in \mathcal{U}}$, ${\boldsymbol{k}_{\rm e}(\x) \in \mathcal{U}}$, and the input constraint set $\mathcal{U}$ is convex. We now restrict the choice of $\eta$ such that $\boldsymbol{k}_{\rm s}$ has desirable properties.
\begin{assumption} \label{ass:switchingFun}
    The switching function ${\eta : \mathcal{X} \rightarrow [0,1]}$ is continuously differentiable, and satisfies $\eta(\x) = 1, \nspace{2}\forall \nspace{1} \x \in \Cb$.
\end{assumption}
An example of many possible switching functions satisfying \Cref{ass:switchingFun} is ${\eta(\x) = \Gamma(h_{\rm b}(\x))}$ where ${\Gamma : \R \rightarrow [0,1]}$
is a
continuously differentiable
step function given by
\begin{align} \label{eq:smoothstep}
    \Gamma(z) \triangleq
    \begin{cases}
        0 & {\rm if}\ z \leq -\varepsilon, \\
        3(\frac{z + \varepsilon}{\varepsilon})^2 - 2(\frac{z + \varepsilon}{\varepsilon})^3 & {\rm if}\ -\varepsilon < z < 0,\\
        1 & {\rm if}\ z \geq 0,
    \end{cases}
\end{align}
and ${\varepsilon > 0}$ governs the switching region.
\begin{lemma} \label{lemma:CB_invariant_switch}
    The switched controller $\boldsymbol{k}_{\rm s}$ in \eqref{eq:switching_controller} renders the backup set $\Cb$ satisfying \Cref{def:backup} forward invariant along \eqref{eq:affine-dynamics} for any switching function
    $\eta$
    satisfying \Cref{ass:switchingFun}.
    \begin{proof}
       By \Cref{ass:switchingFun}, for all ${\x \in \Cb}$ we have that ${\boldsymbol{k}_{\rm s}(\x) = \ub(\x)}$ as per \eqref{eq:switching_controller}, and by \Cref{def:backup}, the backup controller $\ub$ renders $\Cb$ forward invariant.
    \end{proof}
\end{lemma}
\begin{remark}
   For any $\eta$ satisfying \Cref{ass:switchingFun}, the switched controller $\boldsymbol{k}_{\rm s}$ itself is a valid backup controller for $\Cb$.
\end{remark}
Using \eqref{eq:switching_controller}, the switched dynamical system is
\begin{align}\label{eq: f_cl_switched}
    \dot{\boldsymbol{x}} = \boldsymbol{f}_{\rm s}(\boldsymbol{x}) \triangleq 
    \boldsymbol{f}(\x) + \boldsymbol{g}(\x)\boldsymbol{k}_{\rm s}(\x).
\end{align}
With the above definitions, the expanded safe set is
\begin{align} \label{def:C_BI_new}
    \Cbistar \triangleq \left\{\!
    \x \in \mathcal{X}
    \middle|
    \begin{array}{c}
    h({\boldsymbol{\phi}_{\rm s}}(\tau,\x)) \geq 0, \forall \nspace{1} \tau \in [0,T], \\
    h_{\rm b}({\boldsymbol{\phi}_{\rm s}}(T,\x)) \geq 0 \\
    \end{array} \!
    \right\}.
\end{align}
Here, the switched flow ${{\boldsymbol{\phi}_{\rm s}}(\tau,\x)}$ is the evolution of \eqref{eq: f_cl_switched} over ${\tau\!\in\![0,T]}$ for a fixed, finite horizon ${T \!>\!0}$ starting at state $\x$:
\begin{align} \label{eq: flow_switched}
    \frac{\partial}{\partial \tau}{\boldsymbol{\phi}_{\rm s}}(\tau,\x) = \boldsymbol{f}_{\rm s}({\boldsymbol{\phi}_{\rm s}}(\tau,\x)), \sspace \phins{0}{\x} = \boldsymbol{x}.
\end{align}
\begin{remark}
A possible switching function that would render $\Cb$ forward invariant could be ${\eta(\x) = {1}}$ if ${h_{\rm b}(\x) \geq 0}$ and ${\eta(\x) = {0}}$ otherwise.
However, the discontinuity of this switch would make the controller \eqref{eq:switching_controller}
and
the switched system \eqref{eq: f_cl_switched} non-smooth.
Restricting $\eta$ to be continuously differentiable admits global solutions ${\boldsymbol{\phi}_{\rm s}(\tau,\x)}$ and well-behaved gradients.
\end{remark}

We now state a powerful result for
the expanded set $\Cbistar$.

\begin{lemma}
    \label{lemma:C_I_star_controlled_invariant}
    The set ${\mathcal{C}^{\star}_{\rm I}}$ in \eqref{def:C_BI_new} is controlled invariant, and the switched controller ${\boldsymbol{k}_{\rm s}}$ renders ${\mathcal{C}^{\star}_{\rm I}}$ forward invariant along \eqref{eq: f_cl_switched}, such that ${\x \in \Cbistar \implies {\boldsymbol{\phi}_{\rm s}}(\tau,\x) \in \Cbistar \subseteq \Cs, \forall \tau \geq 0}$.
    \begin{proof}
    From the definition of $\Cbistar$, we have $    \boldsymbol{x} \in \Cbistar \implies \boldsymbol{\phi}_{\rm s}(T, \boldsymbol{x}) \in \Cb$,
    and with \Cref{lemma:CB_invariant_switch}, 
    \begin{align} \label{eq: l1_1}
        \boldsymbol{x} \in \Cbistar \implies \boldsymbol{\phi}_{\rm s}(\tau, \boldsymbol{x}) \in \Cb \subseteq \Cs, \forall \nspace{1} \tau \geq T.
    \end{align}
    The flow is recursive and thus for any ${\boldsymbol{x} \in \mathbb{R}^n}$ and ${\tau, \vartheta \geq 0}$, ${\boldsymbol{\phi}_{\rm s} (\tau \!+\! \vartheta, \boldsymbol{x}) \!=\! \boldsymbol{\phi}_{\rm s} (\tau, \boldsymbol{\phi}_{\rm s} (\vartheta, \boldsymbol{x}))}$.
    Using this property and \eqref{eq: l1_1},
    \begin{align} \label{eq: l1_Tb}
        \boldsymbol{x} \in \Cbistar \implies \boldsymbol{\phi}_{\rm s} (T, \boldsymbol{\phi}_{\rm s} (\vartheta, \boldsymbol{x})) \in \Cb, \forall \nspace{1}\vartheta \geq 0.
    \end{align}
    From \eqref{eq: l1_1} and \eqref{def:C_BI_new}, ${\boldsymbol{x} \in \Cbistar \implies \boldsymbol{\phi}_{\rm s} (\tau, \boldsymbol{x}) \in \Cs, \forall \nspace{1} \tau \geq 0}$.
    Using the recursive property once more
    \begin{align} \label{eq: l1_done}
        \nspace{-3} \boldsymbol{x} \in \Cbistar \nspace{-6} \implies \nspace{-6}\boldsymbol{\phi}_{\rm s} (\tau, \boldsymbol{\phi}_{\rm s} (\vartheta, \boldsymbol{x})) \nspace{-2} \in \nspace{-2} \Cs, \forall \nspace{1} \tau \nspace{-1}\in \nspace{-1}[0, T], \forall \vartheta \nspace{-2}\geq \nspace{-2}0.
    \end{align}
    Definition \eqref{def:C_BI_new} with \eqref{eq: l1_Tb} and \eqref{eq: l1_done} completes the proof.
\end{proof}    
\end{lemma}

Now that $\Cbistar$ has been shown to be controlled invariant, we derive its forward invariance conditions which may be used to solve for a safe control signal. From the definition of $\Cbistar$, we require that for some class-$\mathcal{K}_{\infty}$ functions $\alpha$, $\alpha_{\rm b}$
\begin{align}
\begin{aligned} \label{eq:invariance_Cbi}
\dot{{h}}(\boldsymbol{\phi}_{\rm s}(\tau, \boldsymbol{x}),\boldsymbol{u}) &\geq -\alpha\big({h}(\boldsymbol{\phi}_{\rm s}(\tau, \boldsymbol{x}))\big), \sspace \forall \tau \in [0,T], \\
\dot{{h}}_{\rm b}(\boldsymbol{\phi}_{\rm s}(T, \boldsymbol{x}),\boldsymbol{u}) &\geq -\alpha_{\rm b}\big({h}_{\rm b}(\boldsymbol{\phi}_{\rm s}(T, \boldsymbol{x}))\big),
\end{aligned}
\end{align}
holds.
From the chain rule we have
\begin{subequations}
\begin{align}
    \dot{{h}}(\boldsymbol{\phi}_{\rm s}(\tau, \boldsymbol{x}),\boldsymbol{u}) &= \nabla {h}(\boldsymbol{\phi}_{\rm s}(\tau, \boldsymbol{x})){\boldsymbol{\Phi}}(\tau,\x)\dot{\x}, \label{eq:h_hat_dot_1} \\
    \dot{{h}}_{\rm b}(\boldsymbol{\phi}_{\rm s}(T, \boldsymbol{x}),\boldsymbol{u}) &= \nabla {h}_{\rm b}(\boldsymbol{\phi}_{\rm s}(T, \boldsymbol{x})){\boldsymbol{\Phi}}(T,\x)\dot{\x},
    \label{eq:hb_hat_dot_1}
\end{align}
\end{subequations}
with ${\dot{\x} = {\boldsymbol{f}}(\x) + {\boldsymbol{g}}(\x) {\boldsymbol{u}}}$.
Here, ${\boldsymbol{\Phi}(\tau,\x) \triangleq \frac{\partial \boldsymbol{\phi}_{\rm s}(\tau, \boldsymbol{x})}{\partial \x}}$ is the sensitivity of the switched flow to
$\x$, the initial condition for \eqref{eq: flow_switched}. The sensitivity matrix is the solution to 
\begin{align}\label{eq:STM_gen}
    \frac{\partial}{\partial \tau}{\boldsymbol{\Phi}}(\tau,\x)  =  {\boldsymbol{F}}_{\rm s}(\boldsymbol{\phi}_{\rm s}(\tau, \boldsymbol{x})){\boldsymbol{\Phi}}(\tau,\x), \nspace{12} {\boldsymbol{\Phi}}(0,\x) =  \boldsymbol{I}_{n},
\end{align}
where ${{\boldsymbol{F}}_{\rm s}(\x) \triangleq \frac{\partial {\boldsymbol{f}}_{\rm s}}{\partial \x}(\x)}$ is the closed-loop Jacobian of \eqref{eq: f_cl_switched}.

We now present the main result.
\begin{theorem} \label{thm:Invariance_Cbistar}
    Any locally Lipschitz controller ${{\boldsymbol{k}}:{\mathcal{X}}\rightarrow{\mathcal{U}}}$, ${\boldsymbol{u}} = {\boldsymbol{k}}(\x)$ satisfying 
   \begin{subequations} \label{eq:invariance_Cbi_2}
       \begin{align}
       &\begin{aligned} \label{eq:h_hat_invariance_1}
           \nabla {h}(\boldsymbol{\phi}_{\rm s}(\tau, \boldsymbol{x}))&{\boldsymbol{\Phi}}(\tau,\x)\big( {\boldsymbol{f}}(\x) + {\boldsymbol{g}}(\x) {\boldsymbol{u}}\big) \geq \\ &-\alpha\big({h}(\boldsymbol{\phi}_{\rm s}(\tau, \boldsymbol{x}))\big), \sspace \forall \tau \in [0,T],
       \end{aligned} \\
       &\begin{aligned} \label{eq:hb_hat_invariance_1}
           \nabla {h}_{\rm b}(\boldsymbol{\phi}_{\rm s}(T, \boldsymbol{x}))&{\boldsymbol{\Phi}}(T,\x)\big( {\boldsymbol{f}}(\x) + {\boldsymbol{g}}(\x) {\boldsymbol{u}}\big) \geq \\ &-\alpha_{\rm b}\big({h}_{\rm b}(\boldsymbol{\phi}_{\rm s}(T, \boldsymbol{x}))\big),
       \end{aligned}
       \end{align}
   \end{subequations}
   for all
   ${\x \!\in\! \Cbistar}$, renders ${\Cbistar \!\subseteq\! \Cs}$ forward invariant for $\eqref{eq:affine-dynamics}$.
   \begin{proof}
       By applying \Cref{thm: cbf} to system \eqref{eq:affine-dynamics}, we have that the conditions in \eqref{eq:invariance_Cbi_2} imply that for any ${\x_0 \in \Cbistar}$, ${\x(t) \in \Cbistar}$ for all $t \geq 0$. By definition, $\Cbistar \subseteq \Cs$.
   \end{proof}
\end{theorem}
We use \Cref{thm:Invariance_Cbistar} to obtain a point-wise optimal safe controller via the proposed \textit{Generalized Backup CBF} method:
\begin{align*} 
    {\boldsymbol{k}}_{\rm safe}(\x) = \underset{{\boldsymbol{u}} \in {\mathcal{U}}}{\text{argmin}} \mkern9mu &
    \| {\boldsymbol{u}} - {\boldsymbol{k}}_{\rm p}(\x) \| ^{2} \quad 
    \tag{{GB}-QP} \label{eq:gen-qp}
    \\
    \text{s.t.  } 
    & \eqref{eq:h_hat_invariance_1}, \ \eqref{eq:hb_hat_invariance_1}.
\end{align*}
Further,
by \Cref{lemma:C_I_star_controlled_invariant}, there exists at least one controller (namely $\boldsymbol{k}_{\rm s}$) which renders $\Cbistar$ forward invariant, and satisfies the input constraints imposed by $\mathcal{U}$. 

Having established the theoretical basis for generalizing the backup set method, we may now direct our attention to an example to answer the question: \textit{``Why should the set-expanding controller be different from the backup controller?''}

\begin{figure}
    \centering
    \hspace{-.3cm}
    \includegraphics[width=1\linewidth]{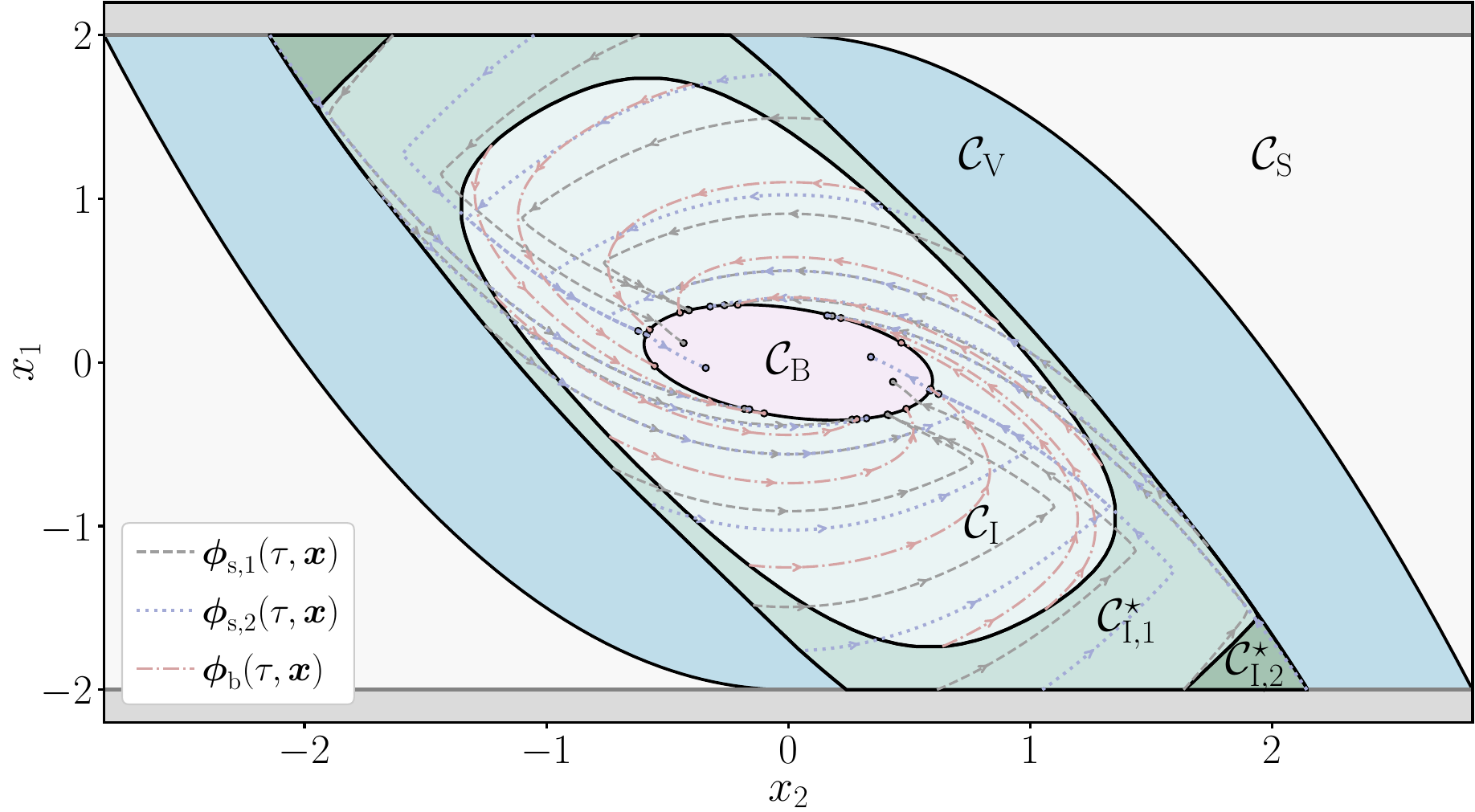}
    \vspace{-.05cm}
    \caption{Controlled invariant sets for \eqref{eq: db_int} with three different expansion strategies.
    $\Cbi$ is constructed using the closed-loop flow for the backup controller $\ub$, whilst $\mathcal{C}^\star_{\rm I,1}$ and $\mathcal{C}^\star_{\rm I,2}$ use the switching controller $\boldsymbol{k}_{\rm s}$ in \eqref{eq:switching_controller}, which allows for a more flexible set expansion policy. The flows denoted by $\boldsymbol{\phi}_{\rm b}$, $\boldsymbol{\phi}_{\rm s,1}$, and $\boldsymbol{\phi}_{\rm s,2}$ represent the evolution of \eqref{eq: db_int} under each of the controllers. The viability kernel, denoted by $\mathcal{C}_{\rm V}$, is plotted for comparison.
    }
    \label{fig:CI_dbint}
    \vskip - 4mm
\end{figure}

\begin{casestudy}[Double Integrator] \label{ex:db_int_motivation}
Consider the system
\begin{align} \label{eq: db_int}
    \dot{\boldsymbol{x}} = \boldsymbol{A}\x + \boldsymbol{B}\boldsymbol{u}, \nspace{10} \boldsymbol{A} = \begin{bmatrix}
        0 & 1 \\
        0 & 0 
    \end{bmatrix}, \nspace{5}\boldsymbol{B} =\begin{bmatrix}
        0 \\
        1
    \end{bmatrix},
\end{align}
with state ${\boldsymbol{x} =
[x_1 ~ x_2]^\top \in \mathbb{R}^2}$, input ${u \in \mathcal{U} = [-1, 1]}$, and constraint set $\Cs \!\triangleq\! \{\boldsymbol{x} \!\in\! \mathbb{R}^2 \!:\! h(\x) \!=\! x^2_{\rm max}\! -\! x_1^2 \!\geq \!0\}$.
Let the backup set be
$\mathcal{C}_{\rm B} \triangleq \{\boldsymbol{x} \!\in\! \mathbb{R}^2 : h_{\rm b}(\x) = \rho - \x^\top \boldsymbol{P} \x \geq 0  \}$,
for a positive definite matrix $\boldsymbol{P}$ and a ${\rho >0}$ such that ${\Cb \subseteq \Cs}$.
Consider the backup controller with gain $\boldsymbol{K}$ as
\begin{align} \label{eq: kb_dbint}
  \ub(\x) = {\rm sat}(-\boldsymbol{K} \x),
\end{align}
where ${\rm sat}$ denotes any smooth saturation function.
To ensure that
$\ub$ does not saturate inside $\Cb$,
we require a condition\footnote{By \cite[Eq. (B4)]{khalil_nonlinear_2015} we require ${\max_{\x^\top \boldsymbol{P} \x \leq \rho} |\boldsymbol{K}\x| \!=\! \sqrt{\rho} \|{\boldsymbol{K} \boldsymbol{P}^{-1/2} \|} \!\leq\! 1}$.} 
on 
$\boldsymbol{K}$
so that
${\ub(\x) \!=\! -\boldsymbol{K} \x}$ for all ${\x \in \Cb}$.
Then, 
this controller renders $\Cb$ forward invariant 
if 
$\boldsymbol{K}$
satisfies
${\boldsymbol{A}_{\rm cl}^\top \boldsymbol{P} \!+\! \boldsymbol{P}\boldsymbol{A}_{\rm cl} \!=\! - \boldsymbol{Q}}$ where ${\boldsymbol{A}_{\rm cl} \!\triangleq \!\boldsymbol{A}\! -\! \boldsymbol{B}\boldsymbol{K}}$
with a positive definite matrix $\boldsymbol{Q}$.
Thus, to render $\Cb$ forward invariant while satisfying input limits for $\ub$, the choice of the gain $\boldsymbol{K}$ is restricted.
This restriction may negatively impact the size of the set $\Cbi$ when the same controller $\ub$ is used to expand $\Cb$.

Now, consider the expanding controller
\begin{align} \label{eq: k_star_dbint}
  \boldsymbol{k}_{\rm e,1}(\x) = {\rm sat}(-\boldsymbol{K}_{\rm e}\x),
\end{align}
with a gain $\boldsymbol{K}_{\rm e}$.
Because we do not have to verify that this controller renders $\Cb$ forward invariant, and because \eqref{eq: k_star_dbint} is saturated to satisfy input constraints, the gain ${\boldsymbol{K}_{\rm e}}$ can be made arbitrarily large.
Choosing a larger gain may facilitate a larger expansion $\Cbistar$ of the backup set $\Cb$.
Alternatively, to further improve expansion, we may also consider a smooth approximation of the time-optimal controller described in \cite[Ch. 3.9]{bryson1975} which reaches the origin in minimum time:
\begin{align} \label{eq: k_opt_dbint}
  \boldsymbol{k}_{\rm e,2}(\x) = - {\rm sat}\big((x_2^2 \nspace{2}{\rm sat}(x_2/\beta) + 2x_1)/\beta \big),
\end{align}
where ${\beta > 0}$ is a smoothing parameter.
\Cref{fig:CI_dbint} compares\footnote{The simulation uses the constants ${\boldsymbol{K} = [2, 1.6]}$, ${\boldsymbol{K}_{\rm e} = 30\nspace{1}\boldsymbol{K}}$, ${\rho = 0.15}$, ${\boldsymbol{Q} = \boldsymbol{I}_2}$, ${\beta = 1\times10^{-9}}$, ${\varepsilon = 1\times10^{-3}}$, ${T = 2}$.} the size of $\Cbi$ defined in \eqref{def:C_BI} with $\mathcal{C}^\star_{\rm I,1}$ and $\mathcal{C}^\star_{\rm I,2}$ defined in \eqref{def:C_BI_new} using $\boldsymbol{k}_{\rm s}$ in \eqref{eq:switching_controller} with the expanding controllers $\boldsymbol{k}_{\rm e,1}$ and $\boldsymbol{k}_{\rm e,2}$ respectively, and the switching function in \eqref{eq:smoothstep}. Clearly,  ${\Cbi \subseteq \mathcal{C}^\star_{\rm I,1}, \mathcal{C}^\star_{\rm I,2}}$, illustrating that the generalized backup CBF method yields significantly larger safe sets in this example.
\end{casestudy}
By decoupling the backup and expanding controllers, one may construct a larger controlled invariant set (see \Cref{fig:CI_dbint}), leading to
less conservative safe controllers.
Further, by establishing the necessary machinery for this decoupling, we allow for
a very broad class of expanding controllers.
Note, however, that
not all controllers may be \textit{useful} expanders\footnote{For an ill-informed choice of $\boldsymbol{k}_{\rm e}$, we have $\Cbistar\equiv\Cb$ in the worst case.}.
\begin{remark}[Optimal Expanding Controller] \label{rem:OCP}
We recall the optimal control problem\footnote{${\mathcal{X}_{\rm c}}$ and  ${\mathcal{U}_{\rm c}}$ are the set of piecewise continuous states and control signals.}
from \cite{gurriet_online_2018}:
\begin{align} \label{eq:OCP}
\begin{aligned}
    \boldsymbol{k}^*_{\rm e}&(\x) \triangleq 
    \argmax_{\boldsymbol{\phi} \in \mathcal{X}_{\rm c}, \boldsymbol{u} \in \mathcal{U}_{\rm c}} \quad h_{\rm b}(\boldsymbol{\phi}(T,\x)) \\
    {\rm s.t.} \nspace{10} 
    & \frac{\partial}{\partial \tau}\boldsymbol{\phi}(\tau,\x) \!=\! \boldsymbol{f}(\boldsymbol{\phi}(\tau,\x)) \!+\! \boldsymbol{g}(\boldsymbol{\phi}(\tau,\x))\boldsymbol{u}(\tau), \forall \tau \!\in\! [0,T], \\
    &\boldsymbol{u}(\tau) \in \mathcal{U}, \sspace \forall \tau \in [0,T],
    \raisetag{4.9\baselineskip} 
    \\
    & \boldsymbol{\phi}(0,\x) = \x.
\end{aligned}
\end{align}
The controller ${\boldsymbol{k}_{\rm e}^*}$ that solves \eqref{eq:OCP} is an optimal expanding control policy
ensuring that
the terminal state, 
$\boldsymbol{\phi}(T,\x)$,
lies as far as possible inside the backup set. Though solving \eqref{eq:OCP} is challenging for nonlinear systems and arbitrary $h_{\rm b}$, minimum-time-to-origin optimal control problems may often be useful surrogates to \eqref{eq:OCP} (see e.g., \eqref{eq: k_opt_dbint}).
Next, we discuss how the cost function in \eqref{eq:OCP} may inform adaptation.
\end{remark}

\begin{remark}
    If ${\boldsymbol{k}_{\rm e}(\x) \!=\! \ub(\x)}$, then ${\Cbistar\!\equiv \!\Cbi}$ and the \eqref{eq:gen-qp} recovers the \eqref{eq:bcbf-qp} as a special case.
\end{remark}
\section{Adaptive Set Expansion} \label{sec:adaptive}

In this section, we consider yet another layer of generalization: we explore expanders that are not fixed, but change over time through adaptation.
Suppose now that the controller $\kstar$ is parameterized\footnote{For example, one could write $\kstar$ in \eqref{eq: k_star_dbint} as ${\kstar(\x,\boldsymbol{\theta}) = {\rm sat}(-\boldsymbol{\theta}^\top \x)}$.} by ${\boldsymbol{\theta} \in \R^{p}}$ and written as $\kstar(\x,\boldsymbol{\theta})$.
Considering the parameters ${\boldsymbol{\theta}}$ as auxiliary states, the full augmented state is written as ${\hat{\boldsymbol{x}} = [\x^\top \nspace{5}\boldsymbol{\theta}^\top]^\top\!\!\in \widehat{\mathcal{X}} \triangleq \mathcal{X} \times \R^{p},}$
and thus the augmented dynamics for \eqref{eq:affine-dynamics} can be written as 
\begin{align} \label{eq:affine-dynamics_aux}
    \dot{\xaux} = 
    \underbrace{
    \begin{bmatrix}
        \boldsymbol{f}(\x) \\
        \boldsymbol{0}_{p \times 1}
    \end{bmatrix}}_{\triangleq \nspace{1}\widehat{\boldsymbol{f}}(\xaux)}
    + 
    \underbrace{
    \begin{bmatrix}
        \boldsymbol{g}(\x) & \boldsymbol{0}_{n \times p} \\
        \boldsymbol{0}_{p \times m} & \boldsymbol{I}_{p}
    \end{bmatrix}}_{\triangleq \nspace{1}\widehat{\boldsymbol{g}}(\xaux)}
    \hat{\boldsymbol{u}},
\end{align}
where ${\hat{\boldsymbol{u}} \triangleq [\boldsymbol{u}^\top \nspace{5}\boldsymbol{u}_{\theta}^\top]^\top \!\!\in \widehat{\mathcal{U}} = \mathcal{U} \times \R^p}$ and ${\boldsymbol{u}_{\theta} \in \R^{p}}$ is an auxiliary input which modifies the parameter $\boldsymbol{\theta}$.
We now define the \textit{augmented} backup controller and expanding controller:
\begin{align*}
    \hat{\boldsymbol{k}}_{\rm b}(\xaux) \!\triangleq \!  
    \begin{bmatrix}
        \boldsymbol{k}_{\rm b}(\x)^\top \nspace{10}
        \boldsymbol{0}_{1 \times p}
    \end{bmatrix}^\top\nspace{-5}, 
    \sspace
    \hat{\boldsymbol{k}}_{\rm e}(\xaux)
    \!\triangleq\! 
    \begin{bmatrix}
        \boldsymbol{k}_{\rm e}(\x,\boldsymbol{\theta})^\top \nspace{10}
        \boldsymbol{0}_{1 \times p}
    \end{bmatrix}^\top\nspace{-6}.
\end{align*}
Notice that each controller keeps $\boldsymbol{\theta}$ fixed. Analogously to \Cref{sec:generalbCBF}, we write the switched augmented dynamics as 
\begin{align}\label{eq: f_s_aux}
    \dot{\xaux} = \widehat{\boldsymbol{f}}_{\rm s}(\xaux) \triangleq 
    \widehat{\boldsymbol{f}}(\xaux) + \widehat{\boldsymbol{g}}(\xaux)\hat{\boldsymbol{k}}_{\rm s}(\xaux),
\end{align}
for the switched controller 
\begin{align}\label{eq:switching_controller_aux}
    \!\hat{\boldsymbol{k}}_{\rm s}(\xaux) \!=\!
    \begin{bmatrix}
        \big(1 \!-\! \eta(\x)\big)\kstar(\x,\boldsymbol{\theta})^\top \!\!+\! \eta(\x)\ub(\x)^\top \nspace{10}
        \boldsymbol{0}_{1 \times p}
    \end{bmatrix}^\top\nspace{-6},
\end{align}
with ${\eta : \mathcal{X} \rightarrow [0,1]}$. 
Then, the augmented switched flow is
\begin{align} \label{eq: auxFlow}
    \frac{\partial}{\partial \tau}{\boldsymbol{\widehat{\phi}}_{\rm s}}(\tau,\xaux) = \widehat{\boldsymbol{f}}_{\rm s}(\phiaux), \sspace \phinbaux{0}{\xaux} = \xaux,
\end{align}
over the interval ${\tau\in[0,T]}$ for a horizon ${T > 0}$.
We redefine the safe and backup sets in terms of the augmented state $\xaux$:
\begin{align*}
    \widehat{\mathcal{C}}_{\rm S} \triangleq \{\xaux \in \widehat{\mathcal{X}} : \hat{h}(\xaux) \ge 0\}, \sspace
    \widehat{\mathcal{C}}_{\rm B} \triangleq \{\xaux \in \widehat{\mathcal{X}} : \hat{h}_{\rm b}(\xaux) \ge 0\},
\end{align*}
where
${\hat{h}(\xaux) \triangleq h(\x)}$ and
${\hat{h}_{\rm b}(\xaux) \triangleq h_{\rm b}(\x)}$.
We now present a result analogous to \Cref{lemma:CB_invariant_switch} for the augmented system.

\begin{lemma} \label{lemma:CB_hat_invariant_switch}
    The controller $\hat{\boldsymbol{k}}_{\rm s}$ in \eqref{eq:switching_controller_aux} renders $\Cbaux$ forward invariant along \eqref{eq:affine-dynamics_aux} for any ${\eta}$ satisfying \Cref{ass:switchingFun}.
    \begin{proof}
    Observe that in \eqref{eq:affine-dynamics_aux}, the dynamics of $\x$ are decoupled from those of $\boldsymbol{\theta}$, and that ${\xaux \in \Cbaux}$ if and only if ${\x \in \Cb}$, for any ${\boldsymbol{\theta}\in \mathbb{R}^p}$.
    Therefore, $\hat{\boldsymbol{k}}_{\rm s}$ rendering $\Cbaux$ invariant along \eqref{eq:affine-dynamics_aux} is equivalent to $\boldsymbol{k}_{\rm s}$ rendering $\Cb$ invariant along \eqref{eq:affine-dynamics}. Thus, by \Cref{lemma:CB_invariant_switch},
    $\Cbaux$ is invariant under $\hat{\boldsymbol{k}}_{\rm s}$.
    \end{proof}
\end{lemma}
We define the new implicit safe set ${\Cbiaux \!\subseteq\! \widehat{\mathcal{C}}_{\rm S}}$ based on $\xaux$:
\begin{align} \label{def:C_BI_aux}
    \Cbiaux \triangleq \left\{ \xaux \in \widehat{\mathcal{X}} \,\middle|\, 
    \begin{array}{c}
    \hat{h}(\phiaux) \geq 0, \forall \nspace{1} \tau \in [0,T], \\
    \hat{h}_{\rm b}(\phiauxT) \geq 0 \\
    \end{array}
    \right\}.
\end{align}
This set encodes all states $\xaux$ for which the $\x$ components of $\phiaux$ stay within $\Cs$ and reach $\Cb$ when ${\tau = T}$. 
We now show that
this new implicit safe set is controlled invariant.

\begin{lemma}
    The set $\Cbiaux$ is controlled invariant and rendered forward invariant along \eqref{eq:affine-dynamics_aux} by 
    $\hat{\boldsymbol{k}}_{\rm s}$.
    \begin{proof}
        The proof follows that of \Cref{lemma:C_I_star_controlled_invariant}.
        Due to ${\xaux \!\in\! \Cbiaux}$ $\!\implies\! \phiauxT \!\in\! \Cbaux$,
        \Cref{lemma:CB_hat_invariant_switch}, and the flow group property, 
        \begin{align} \label{eq:intermed_CIstar_1}
            \xaux \in \Cbiaux \implies \widehat{\boldsymbol{\phi}}_{\rm s} (T, \widehat{\boldsymbol{\phi}}_{\rm s} (\vartheta, \xaux)) \in \Cbaux, \forall \nspace{1}\vartheta \geq 0.
        \end{align}
        Based on $\eqref{def:C_BI_aux}$, the flow group property, and since ${\Cbaux \subseteq \Csaux}$,
        \begin{align} \label{eq:intermed_CIstar_2}
            \nspace{-3} \xaux \in \Cbiaux \nspace{-6} \implies \nspace{-6}\widehat{\boldsymbol{\phi}}_{\rm s} (\tau, \widehat{\boldsymbol{\phi}}_{\rm s} (\vartheta, \xaux)) \nspace{-2} \in \nspace{-2} \Csaux, \forall \nspace{1} \tau \nspace{-1}\in \nspace{-1}[0, T], \forall \vartheta \nspace{-2}\geq \nspace{-2}0.          
        \end{align}
        Thus \eqref{eq:intermed_CIstar_1}, \eqref{eq:intermed_CIstar_2} with definition \eqref{def:C_BI_aux} complete the proof.
    \end{proof}
\end{lemma}
Now that the controlled invariance of $\Cbiaux$ has been established, we are in a position to derive the forward invariance conditions for this set. From definition \eqref{def:C_BI_aux}, this requires
\begin{subequations}  \label{eq:invariance_Cbiaux}
\begin{align}
    \nabla \hat{h}(\phiaux)\widehat{\boldsymbol{\Phi}}(\tau,\xaux)\dot{\xaux}  &\geq -\alpha\big(\hat{h}(\phiaux)\big),
    \label{eq:h_hat_dot} \\
    \nabla \hat{h}_{\rm b}(\phiauxT)\widehat{\boldsymbol{\Phi}}(T,\xaux)\dot{\xaux} &\geq -\alpha_{\rm b}\big(\hat{h}_{\rm b}(\phiauxT)\big), \label{eq:hb_hat_dot}
\end{align}
\end{subequations}
for all ${\tau \in [0,T]}$ and some class-$\mathcal{K}_{\infty}$ functions ${\alpha, \alpha_{\rm b}}$. 
The term ${\widehat{\boldsymbol{\Phi}}(\tau,\xaux) \triangleq \frac{\partial \phiaux}{\partial \xaux}}$ is the sensitivity of the flow with respect to changes in $\xaux$, and is calculated similarly to \eqref{eq:STM_gen}.

\begin{theorem}\label{thm:Invariance_Cbiaux}
   Any locally Lipschitz controller ${\hat{\boldsymbol{k}}:\widehat{\mathcal{X}}\rightarrow\widehat{\mathcal{U}}}$, $\hat{\boldsymbol{u}} = \hat{\boldsymbol{k}}(\xaux)$ satisfying
   \begin{subequations} \label{eq:invariance_Cbiaux_2}
       \begin{align}
       &\begin{aligned} \label{eq:h_hat_invariance}
           \nabla \hat{h}(\phiaux)&\widehat{\boldsymbol{\Phi}}(\tau,\xaux)\big( \widehat{\boldsymbol{f}}(\xaux) + \widehat{\boldsymbol{g}}(\xaux) \hat{\boldsymbol{u}}\big) \geq \\ &-\alpha\big(\hat{h}(\phiaux)\big), \sspace \forall \tau \in [0,T],
       \end{aligned} \\
       &\begin{aligned} \label{eq:hb_hat_invariance}
           \nabla \hat{h}_{\rm b}(\phiauxT)&\widehat{\boldsymbol{\Phi}}(T,\xaux)\big( \widehat{\boldsymbol{f}}(\xaux) + \widehat{\boldsymbol{g}}(\xaux) \hat{\boldsymbol{u}}\big) \geq \\ &-\alpha_{\rm b}\big(\hat{h}_{\rm b}(\phiauxT)\big),
       \end{aligned}
       \end{align}
   \end{subequations}
   for all
   ${\xaux \in \Cbiaux}$, renders $\Cbiaux$ forward invariant for $\eqref{eq:affine-dynamics_aux}$, and thus ensures that ${\xaux_0 \in \Cbiaux \implies \x(t) \in \Cs, \forall \nspace{1} t \geq 0}$.
   \begin{proof}
       By applying \Cref{thm: cbf} to system \eqref{eq:affine-dynamics_aux}, the conditions in \eqref{eq:invariance_Cbiaux_2} imply that ${\xaux(t) \in \Cbiaux}$ for all ${t \geq 0}$ if ${\xaux_0 \in \Cbiaux}$. Since $\Cbiaux \subseteq \Csaux$, and because ${\xaux \in \Csaux}$ if and only if $\x \in \Cs$, for any ${\boldsymbol{\theta}\in \mathbb{R}^p}$, we have that $\x(t) \in \Cs, \forall \nspace{2} t \geq 0$.
   \end{proof}
\end{theorem}
Using \Cref{thm:Invariance_Cbiaux}, we design an \textit{Adaptive Generalized Backup CBF} controller which ensures safety and is feasible:
\begin{align*} 
    \hat{\boldsymbol{k}}_{\rm safe}(\xaux) = \underset{\hat{\boldsymbol{u}} \in \widehat{\mathcal{U}}}{\text{argmin}} \mkern9mu &
    \| \hat{\boldsymbol{u}} - \hat{\boldsymbol{k}}_{\rm p}(\xaux) \|_{\boldsymbol{W}} ^{2} \quad 
    \tag{{aGB}-QP} \label{eq:adaptive-qp}
    \\
    \text{s.t.  } 
    & \eqref{eq:h_hat_invariance}, \ \eqref{eq:hb_hat_invariance},
\end{align*}
with matrix ${\boldsymbol{W} \succ 0}$ weighing the inputs related to $\x$ and $\boldsymbol{\theta}$.
The augmented primary controller is
\begin{align}
    \hat{\boldsymbol{k}}_{\rm p}(\xaux) \triangleq
    \begin{bmatrix}
        \boldsymbol{k}_{\rm p}(\x)^\top \nspace{10}
        \boldsymbol{k}_{\rm \theta}(\x,\boldsymbol{\theta})^\top
    \end{bmatrix}^\top.
\end{align}
While $\boldsymbol{k}_{\rm p}$ is prescribed, the desired auxiliary control given by $\boldsymbol{k}_{\rm \theta}$ is in fact a design variable.
Therefore, to encourage set expansion, we propose an approach to adapt $\Btheta$ that is based on 
the gradient of 
the cost function in the optimal control problem \eqref{eq:OCP}.
We select the control direction $\boldsymbol{k}_{{\theta}}$ for the parameter $\boldsymbol{\theta}$ which increases $\hat{h}_{\rm b}(\phiauxT)$ the most, pushing the terminal point of the flow deeper inside $\Cbaux$.
This allows $\xaux$ to evolve farther from $\Cbaux$, improving the expansion.
\begin{corollary}[Adaptation for Set Expansion] \label{cor:adaptation_law}
The controller 
\begin{align}
\boldsymbol{k}_{\rm \theta}(\x,\boldsymbol{\theta}) 
= \gamma \cdot 
\boldsymbol{\Pi} \big ( \nabla \hat{h}_{\rm b}(\phiauxT)\widehat{\boldsymbol{\Phi}}(T,\xaux)\widehat{\boldsymbol{g}}(\xaux) \big)^\top, \label{eq:greedy_theta_dir}
\end{align}
adapts the parameters $\boldsymbol{\theta}$ in the direction of steepest ascent of $\hat{h}_{\rm b}(\phiauxT)$, for ${{\boldsymbol{\Pi}} \triangleq[\boldsymbol{0}_{p \times m}\nspace{6} \boldsymbol{I}_{p}]}$ and ${\gamma > 0}$, and thus locally increases the cost function of \eqref{eq:OCP}.
\end{corollary}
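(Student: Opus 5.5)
We view $\hat{h}_{\rm b}(\phiauxT)$ as a function of the augmented state $\xaux$ and compute its time derivative along the augmented dynamics \eqref{eq:affine-dynamics_aux}. We then isolate the part that depends on the auxiliary input $\boldsymbol{u}_\theta$ and show that \eqref{eq:greedy_theta_dir} is exactly the steepest-ascent choice for that part.

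\textbf{Step 1: chain rule.} As in \eqref{eq:hb_hat_dot}, the chain rule gives
\begin{align*}
\frac{d}{dt}\hat{h}_{\rm b}(\phiauxT) = \boldsymbol{a}(\xaux)\big(\widehat{\boldsymbol{f}}(\xaux) + \widehat{\boldsymbol{g}}(\xaux)\hat{\boldsymbol{u}}\big),
\end{align*}
where
\begin{align*}
\boldsymbol{a}(\xaux) \triangleq \nabla \hat{h}_{\rm b}(\phiauxT)\widehat{\boldsymbol{\Phi}}(T,\xaux).
\end{align*}
The map $\xaux \mapsto \phiauxT$ is continuously differentiable, so $\widehat{\boldsymbol{\Phi}}$ is well defined. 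This holds because $\boldsymbol{f}$, $\boldsymbol{g}$, $\ub$, $\kstar$ and $\eta$ are $C^1$ (\Cref{ass:switchingFun}), which makes $\widehat{\boldsymbol{f}}_{\rm s}$ $C^1$, and standard differentiable dependence of flows on initial conditions then applies.

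\textbf{Step 2: isolate the $\boldsymbol{u}_\theta$ term.} Write $\hat{\boldsymbol{u}} = [\boldsymbol{u}^\top\ \boldsymbol{u}_\theta^\top]^\top$. By the block structure of $\widehat{\boldsymbol{g}}$, we have $\widehat{\boldsymbol{g}}\hat{\boldsymbol{u}} = [(\boldsymbol{g}\boldsymbol{u})^\top\ \boldsymbol{u}_\theta^\top]^\top$. The derivative therefore splits into two parts:
\begin{itemize}
\item a term independent of $\boldsymbol{u}_\theta$;
\item the linear term $\boldsymbol{a}\widehat{\boldsymbol{g}}\boldsymbol{\Pi}^\top\boldsymbol{u}_\theta$.
\end{itemize}
Here $\boldsymbol{\Pi}^\top$ selects the last $p$ input coordinates, and $\widehat{\boldsymbol{g}}\boldsymbol{\Pi}^\top = [\boldsymbol{0}\ \boldsymbol{I}_p]^\top$. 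Hence the $\boldsymbol{u}_\theta$-dependent rate of change of $\hat{h}_{\rm b}(\phiauxT)$ is $\boldsymbol{c}^\top\boldsymbol{u}_\theta$, with
\begin{align*}
\boldsymbol{c} = \boldsymbol{\Pi}(\boldsymbol{a}\widehat{\boldsymbol{g}})^\top = \big(\partial \hat{h}_{\rm b}(\phiauxT)/\partial\boldsymbol{\theta}\big)^\top.
\end{align*}

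\textbf{Step 3: steepest ascent.} Among all $\boldsymbol{u}_\theta$ of fixed norm, Cauchy--Schwarz shows that $\boldsymbol{c}^\top\boldsymbol{u}_\theta$ is maximized by $\boldsymbol{u}_\theta$ parallel to $\boldsymbol{c}$. So $\boldsymbol{k}_\theta = \gamma\boldsymbol{c}$ with $\gamma>0$ is the steepest-ascent direction. Moreover, it yields the contribution $\gamma\|\boldsymbol{c}\|^2 \ge 0$, which is strictly positive whenever $\boldsymbol{c}\neq\boldsymbol{0}$. Since $\hat{h}_{\rm b}(\phiauxT) = h_{\rm b}(\boldsymbol{\phi}_{\rm s}(T,\x))$ evaluated with parameters $\boldsymbol{\theta}$, this is precisely the terminal cost of \eqref{eq:OCP} evaluated along the feasible trajectory generated by $\boldsymbol{k}_{\rm s}$. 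The adaptation therefore locally increases that cost in the sense of a first-order gradient step in $\boldsymbol{\theta}$.

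\textbf{Main obstacle.} The delicate point is the word ``locally''. The ascent statement concerns only the $\boldsymbol{u}_\theta$ contribution to the first-order rate. The actual $\dot{\hat{h}}_{\rm b}$ also contains terms from $\boldsymbol{f}(\x)+\boldsymbol{g}(\x)\boldsymbol{u}$. In addition, \eqref{eq:adaptive-qp} may modify $\boldsymbol{k}_\theta$ when it enforces safety. I would therefore state the claim as a property of the desired auxiliary input: it is the gradient of $\hat{h}_{\rm b}(\phiauxT)$ with respect to $\boldsymbol{\theta}$ at fixed $\x$, scaled by $\gamma$. The result is then a direct consequence of Steps 1--3 and needs no global optimality claim.
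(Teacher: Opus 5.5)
Your proposal is correct and follows the paper's reasoning. The paper gives no separate proof; it justifies the corollary by reading off the $\boldsymbol{u}_\theta$-dependence of the chain-rule expression \eqref{eq:hb_hat_dot}, whose coefficient is $\boldsymbol{\Pi}\big(\nabla \hat{h}_{\rm b}(\phiauxT)\widehat{\boldsymbol{\Phi}}(T,\xaux)\widehat{\boldsymbol{g}}(\xaux)\big)^\top = \big(\partial \hat{h}_{\rm b}(\phiauxT)/\partial\boldsymbol{\theta}\big)^\top$, and your Cauchy--Schwarz step and caveat (the claim concerns the desired auxiliary input, not the QP-filtered one or the full derivative including the $\x$-dynamics) make explicit what the paper leaves informal.
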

\section{Numerical Simulations}
\begin{figure}[t]
\hspace{-.27cm}
\centering
\begin{minipage}{0.175\textwidth}
\centering
\includegraphics[width=\linewidth]{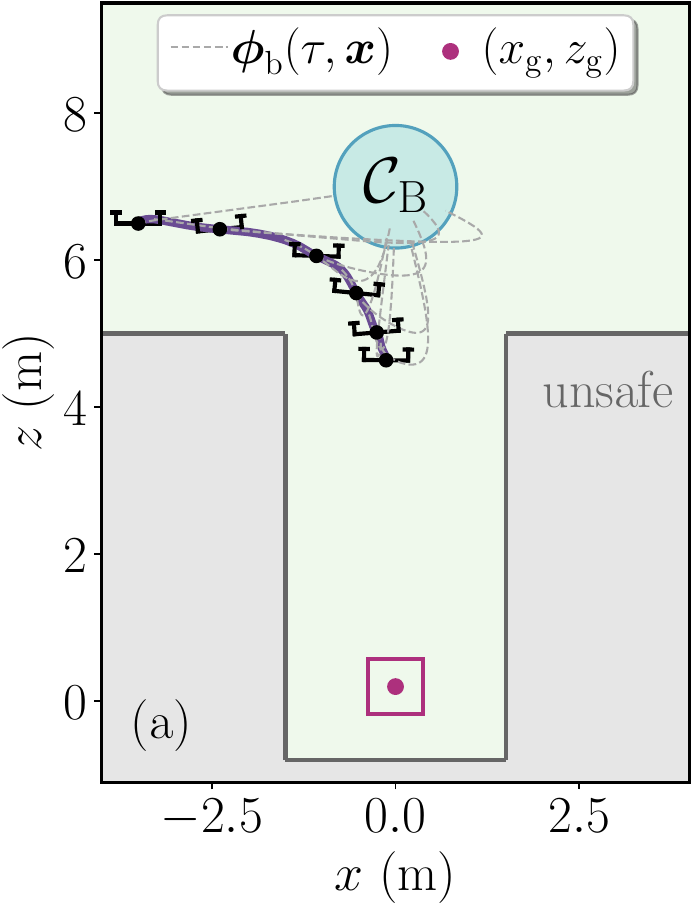}
\end{minipage}
\begin{minipage}{0.15\textwidth}
\centering
\includegraphics[width=\linewidth]{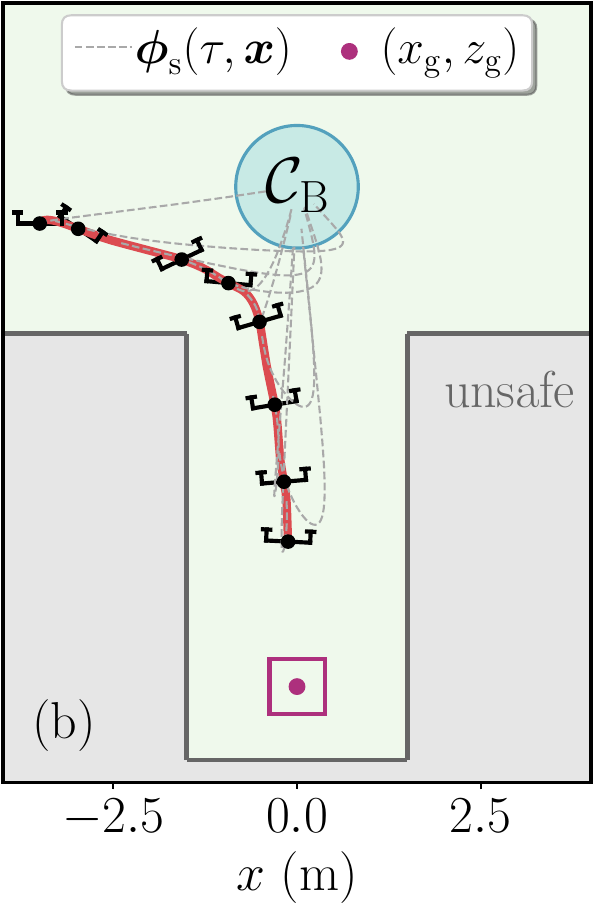}
\end{minipage}
\begin{minipage}{0.15\textwidth}
\centering
\includegraphics[width=\linewidth]{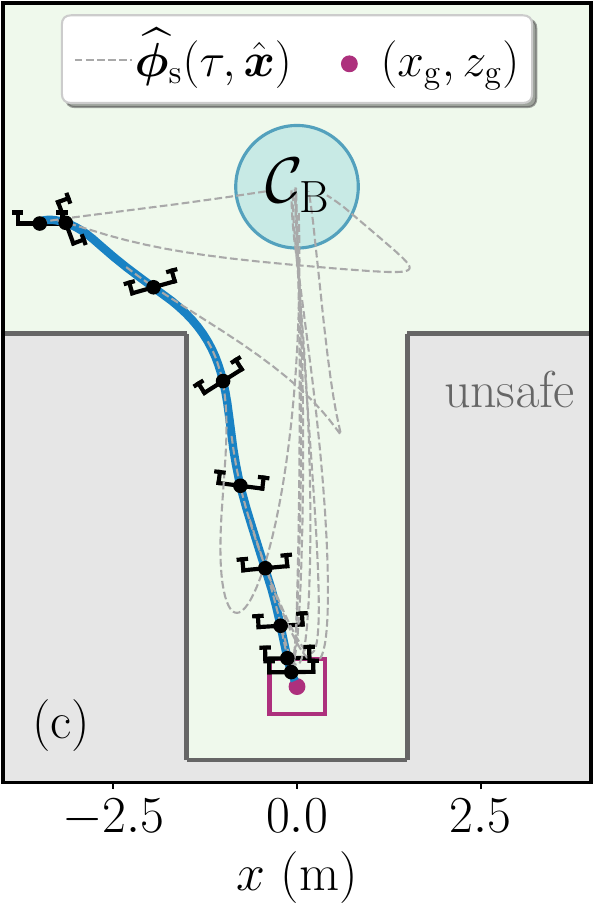}
\end{minipage}
\includegraphics[width=\linewidth]{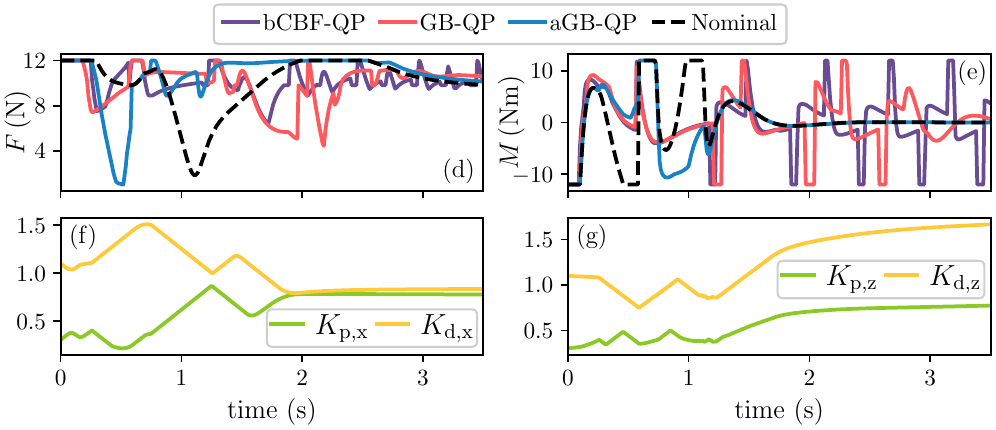}
\caption{Simulation results of the safe quadrotor landing problem, comparing the standard bCBF (purple), the generalized bCBF (red), and the adaptive bCBF (blue).
All three approaches maintain safety (\textbf{a,$\nspace{2}$b,$\nspace{2}$c}) and obey the input constraints (\textbf{d},$\nspace{2}$\textbf{e}), yet differ in task performance. The bCBF is unable to reach the goal due to the terminal constraint, as the flows (dashed gray) must reach $\Cb$ (\textbf{a}). The \eqref{eq:gen-qp} also does not reach the goal, but achieves a much larger expansion of $\Cb$ than the \eqref{eq:bcbf-qp} (\textbf{b}). Using the adaptation law in \Cref{cor:adaptation_law}, the adaptive approach reaches the goal via superior expansion of $\Cb$ (\textbf{c}) by modifying the backup controller gains (\textbf{f,$\nspace{2}$g}).
}\label{fig:quadrotor_main}
\vskip - 4mm
\end{figure}

\casestudy[Quadrotor]\label{ex:quadrotor}
Consider a planar quadrotor
\begin{align}
\begin{aligned}
    \ddot{x} \!=\! F \sin(\theta)/m, \sspace
    \ddot{z} \!=\! F \cos(\theta)/m \!-\! g_{\rm D},\sspace
    \ddot{\theta} \!=\! -M/J,
    \label{eq:quadrotor}
\end{aligned}
\end{align}
with state vector ${\x = [x \nspace{5}z\nspace{5} \theta\nspace{5} \dot{x}\nspace{5} \dot{z}\nspace{5} \dot{\theta}]^\top \!\in\! \mathcal{X}\! \subset \!\mathbb{R}^2\! \times \!\mathbb{S}^1\! \times \!\mathbb{R}^3}$.
The states $x$ and $z$ denote the horizontal and vertical positions,
and $\theta$ is the pitch angle.
The bounded inputs are the thrust
${F \in [0,F_{\rm max}]}$ and moment ${M \in [-M_{\rm max}, M_{\rm max}]}$. Here, ${g_{\rm D}}$ is gravitational acceleration, ${m}$ is the mass of the quadrotor, and ${J}$ is the principal moment of inertia about the $y$-axis.

We consider a safe landing scenario in which the quadrotor must reach a goal location ${(x_{\rm g},z_{\rm g})}$ whilst avoiding the walls and floor of a narrow landing zone (see \Cref{fig:quadrotor_main}).
The safety constraint defines the unsafe walls and floor as a polytopic region ${\Cs \!\triangleq\! \big\{ \x \!\in\! \mathcal{X} \!:\! h(\x) \!=\! \max\{\min\{h_1,h_2,h_3\},h_4 \} \!\geq\! 0 \big\}}$,
where ${h_1(\x) = x - x_{\rm L}}$, ${h_2(\x) = x_{\rm R} - x}$, ${h_3(\x) = z - z_{\rm L}}$, ${h_4(\x) = z - z_{\rm T}}$ with ${x_{\rm R} > x_{\rm L}}$ and ${z_{\rm T} > z_{\rm L}}$. The safe set is then approximated using a smooth approximation of the minimum and maximum functions, as done in \cite{molnar2025polytope}. The nominal controller is a feedback controller that attempts to reach the goal position but ignores the landing obstructions.

The backup controller is a proportional-derivative feedback controller that drives the system \eqref{eq:quadrotor} to the equilibrium point ${\x^* \!=\! [x_{\rm b}\nspace{5}z_{\rm b}\nspace{5}0\nspace{5}0\nspace{5}0\nspace{5}0]^\top \!\!\in\! \Cs}$ such that ${\boldsymbol{k}_{\rm b}(\x) \!=\! [F_{\rm b}, M_{\rm b}]}$:
\begin{align}
    a_{x} &= -K_{{\rm p},x}(x - x_{\rm b}) -K_{{\rm d},x}\dot{x},\\
    a_{z} &= -K_{{\rm p},z}(z - z_{\rm b}) -K_{{\rm d},z}\dot{z} + g_{\rm D}, \\
    \theta_{\rm d} &= {\rm sat}\big(\arctan({a_x}/{a_z})\big), \\
    F_{\rm b} &= {\rm sat}\big(m\sqrt{a_x^2 + a_z^2}\big), \\
    M_{\rm b} &= {\rm sat}\big(J (K_{{\rm p},\theta}(\theta - \theta_{\rm d}) + K_{{\rm d},\theta}\dot{\theta}) \big).
\end{align}
Here, ${K_{{\rm p},x},K_{{\rm d},x},K_{{\rm p},z},K_{{\rm d},z},K_{{\rm p},\theta},K_{{\rm d},\theta} >0}$ are gains. The backup set is a sublevel set of a Lyapunov function centered on $\x^*$, obtained by linearizing \eqref{eq:quadrotor}
under ${\ub}$ which yields
\begin{align*}
    \Cb \triangleq \big\{ \x \in \mathcal{X} \!:\! h_{\rm b}(\x) = \rho -  (\x - \x^*)^\top \boldsymbol{P} (\x -\x^*)\big\},
\end{align*}
for a positive definite matrix
${\boldsymbol{P}}$ and ${\rho \!>\! 0}$. For $\Cb$ and $\ub$ to be a valid backup set and controller pair (Def.~\ref{def:backup}), the gains and $\rho$ must be carefully chosen to ensure that 
$\ub$ does not saturate within $\Cb$, that $\ub$ renders $\Cb$ forward invariant, and that ${\Cb\!\subset\!\Cs}$. Satisfaction of these conditions (the details of which are omitted for space constraints\footnote{\label{footnote:gains}Note that sufficient conditions on the gains and $\rho$ can be obtained by using 
\cite[Sec. 3.7.3]{boyd1994linear} in conjunction with \cite[Thm. 4.7]{khalil2002nonlinear}.}) results in a restrictive set of possible gains and level set sizes. 

\Cref{fig:quadrotor_main} compares\footnote{The simulation uses ${g_{\rm D} \!=\! 9.81 \nspace{2}{\rm m/s^2}}$, ${J \!=\! {\rm 0.25 \nspace{2} kg \nspace{1} m^2}}$, ${m \!=\! {\rm 1 \nspace{2} kg}}$, ${F_{\rm max} \!=\! 12 \nspace{2} {\rm N}}$, ${M_{\rm max} \!=\! 12 \nspace{2} {\rm Nm}}$, ${\rho \!=\! 0.22}$, ${\varepsilon \!=\! 0.01}$, ${T \!=\! 6.75 \nspace{2} {\rm s}}$, ${\gamma \!=\! 10}$.} the standard backup set method (\Cref{sec:bCBF}) with the proposed generalized and adaptive approaches (\Cref{sec:generalbCBF}, \Cref{sec:adaptive}). While all three approaches guarantee safety of \eqref{eq:quadrotor} in the presence of strict input bounds, the three controllers achieve differing levels of set expansion. The \eqref{eq:bcbf-qp} is unable to substantially expand $\Cb$ due to the restrictions imposed on the gains. By separating the expanding controller and the backup controller, the \eqref{eq:gen-qp} is able to move much farther away from $\Cb$, and thus gets much closer to the goal. The \eqref{eq:adaptive-qp} uses parameters ${\boldsymbol{\theta} = [K_{{\rm p},x}\nspace{5}K_{{\rm d},x}\nspace{5}K_{{\rm p},z}\nspace{5}K_{{\rm d},z}\nspace{5}K_{{\rm p},\theta}\nspace{5}K_{{\rm d},\theta}]^\top}$ and adapts them according to \Cref{cor:adaptation_law}. This method enables the quadrotor to safely reach the goal via even better expansion of $\Cb$. Notice in panels (f) and (g) that when the quadrotor achieves a horizontal position near $x_{\rm b}$, the gains governing $x$ remain constant, while the gains governing $z$ increase to allow the quadrotor to move farther down towards the goal.
\section{Conclusion}\label{sec:conc}
We presented a novel approach for guaranteeing the safety of input-bounded nonlinear systems by enhancing the backup set method. We proved that separating the expanding controller from the backup controller preserves the controlled invariance of the expanded set, resulting in substantially larger safe sets. Further, our framework allows for online adaptation of the expanding controller, which was demonstrated to be highly beneficial in a complex quadrotor landing problem.
\end{spacing}

\vskip - 4mm
\begin{spacing}{0.92}
\bibliographystyle{ieeetr}
\bibliography{refs}
\end{spacing}

\end{document}